\definecolor{shadecolor}{rgb}{1, 0.8, 0.3}
\theoremstyle{plain}
\newtheorem{lemma}{Lemma}
\newtheorem{corollary}{Corollary}
\theoremstyle{definition}
\newtheorem{definition}{Definition}
\theoremstyle{remark}
\newcommand{\beq}{\begin{equation}}
\newcommand{\eeq}{\end{equation}}
\newcommand{\bea}{\begin{eqnarray}}
\newcommand{\eea}{\end{eqnarray}}
\newcommand{\bean}{\begin{eqnarray*}}
\newcommand{\eean}{\end{eqnarray*}}
\newcommand{\bit}{\begin{itemize}}
\newcommand{\eit}{\end{itemize}}
\newcommand{\ben}{\begin{enumerate}}
\newcommand{\een}{\end{enumerate}}
\newcommand{\blem}{\begin{lem}}
\newcommand{\elem}{\end{lem}}
\newcommand{\bthm}{\begin{thm}}
\newcommand{\ethm}{\end{thm}}
\newcommand{\bpf}{\begin{IEEEproof}}
\newcommand{\epf}{\end{IEEEproof}}
\newcommand{\comment}[1]{}
\newcommand\defeq{\mathrel{\overset{\makebox[0pt]{\mbox{\normalfont\tiny\sffamily def}}}{=}}}
\NewDocumentCommand{\rot}{O{45} O{1em} m}{\makebox[#2][l]{\rotatebox{#1}{#3}}}%
\begin{document}

\title{Density Evolution on a Class of Smeared Random Graphs: \\ A Theoretical Framework for Fast MRI}

\author{Kabir Chandrasekher, Orhan Ocal, and Kannan Ramchandran\\ 
Department of Electrical Engineering and Computer Sciences, University of California, Berkeley\\ 
\{kabirc, ocal, kannanr\}@berkeley.edu}
\date{}
\maketitle

\begin{abstract}
We introduce a new ensemble of random bipartite graphs, which we term the `smearing ensemble', where each left node is connected to some number of consecutive right nodes.
Such graphs arise naturally in the recovery of sparse wavelet coefficients when signal acquisition is in the Fourier domain, such as in magnetic resonance imaging (MRI).
Graphs from this ensemble exhibit small, structured cycles with high
probability, rendering current techniques for determining iterative decoding
thresholds inapplicable. 
In this paper, we develop a theoretical platform to analyze and evaluate the
effects of smearing-based structure.
Despite the existence of these small cycles, we derive exact density evolution
recurrences for iterative decoding on graphs with smear-length two.
Further, we give lower bounds on the performance of a much larger class from
the smearing ensemble, and provide numerical experiments showing tight agreement
between empirical thresholds and those determined by our bounds.  Finally, we describe a system architecture to recover sparse wavelet
representations in the MRI setting, giving explicit thresholds on the minimum number of Fourier samples needing to be acquired for
the $1$-stage Haar wavelet setting.
In particular, we show that $K$-sparse $1$-stage Haar wavelet coefficients of an
$n$-dimensional signal can be recovered using $2.63K$ Fourier domain samples asymptotically using $\mathcal{O}(K\log{K})$ operations.
\end{abstract}

\section{Introduction}
\label{sec:intro}
We explain our problem through an intriguing balls-and-bins game.
There are $n$ distinct colors, $d$ balls of each color and $M$ bins.
You know beforehand that only $K \ll n$ of the colors, which are selected uniformly at random from the $n$ possible colors, will be `active', but you do not know which ones they are.
You have to throw all the ($dn$) balls into the ($M$) bins.
 The rules of the game are as follows:
\begin{enumerate}[label={R\arabic*)}]
    \itemsep0em
    \item For each color $c$, you choose a subset (possibly using a randomized strategy) $B_c \subset \{0,\cdots,M-1\}$ of size $d$. Then, the system throws the balls of that color $c$ into bins $\{b + b_c: b \in B_c\}$ modulo $M$ where $b_c$ is sampled uniformly at random from $\{0,\cdots,M-1\}$\footnote{Note that the only effect of this is to randomly offset the bins for each color.}.
    \item If a bin contains a single active ball, then all $d$ balls having the same color as that ball can be removed.
    \item The process continues iteratively until either (a) all active balls have been
        removed or 
        (b) there is no bin having a single active ball.
\end{enumerate}

\noindent The goal of the game is to remove all active balls using the minimum number of
bins.
We focus on the regime in which $(n,K,M) \rightarrow \infty$, $d=\mathcal{O}(1)$ and ask the following questions:
\begin{enumerate}
    \item What is the optimal method of dispatching the $d$ balls?
        (That is, what is the optimal strategy for designing the subsets in R1?)
    \item Given $(n,K,d)$, what is the minimum number of bins ($M$) necessary?
\end{enumerate}

While this is an intriguing game in its own right, more importantly, it has connections 
to the design of
sparse-graph codes and peeling decoding.  Surprisingly, and
more relevant here, it is also intimately
related to the recovery of sparse wavelet representations from Fourier domain
samples (see Section ~\ref{sec:wavelets}).

To illustrate, suppose that $d = 3$, then the best known strategy is to throw each  ball at a bin selected uniformly at random.
It has been demonstrated that we need asymptotically $M \simeq 1.222K$ bins as $K$ grows.
This can be shown through density evolution methods, introduced by Richardson and Urbanke
in~\cite{richardson2001capacity}, which have proven powerful in analyzing the
performance of sparse-graph codes.
Now suppose that $d=6$.  The natural strategy
is to again throw each ball at a bin selected uniformly at random.
Surprisingly, this strategy is not optimal.
To see this, we give a brief introduction to the smearing ensemble.  Consider a $g$ dimensional
vector $s = [s_1, s_2, \dots, s_g]$ where $\sum_{i=1}^{g}s_i = d$.  The
ensemble is such that $g$ bins are selected at random, and for the $i$th bin,
the immediately following $s_i-1$ bins are deterministically
selected\footnote{For example, if $s_i=2$ (we henceforth refer to this as the
    smear-length), then bins $b_i$ and $b_i + 1$ (modulo $M$) are selected, where $b_i$ is
    uniformly selected on $\{0,1,\dots,M-1\}$. See Fig.~\ref{fig:balls-and-bins} for
    an illustration.}; this is what we term
the smearing ensemble.
Figure~\ref{fig:balls-and-bins} shows an example illustration for $s = [2,2]$, and we formally define the smearing ensemble in Section~\ref{sec:main_results}.  Examining Table 1, one can see that many simple
smearing strategies outperform the fully random ensemble.  In this paper, we do not claim to design
an optimal strategy for this game; rather, we provide a theoretical platform to 
analyze these structure-exploiting policies.  

\begin{figure}[h]
    \centering
    \includegraphics{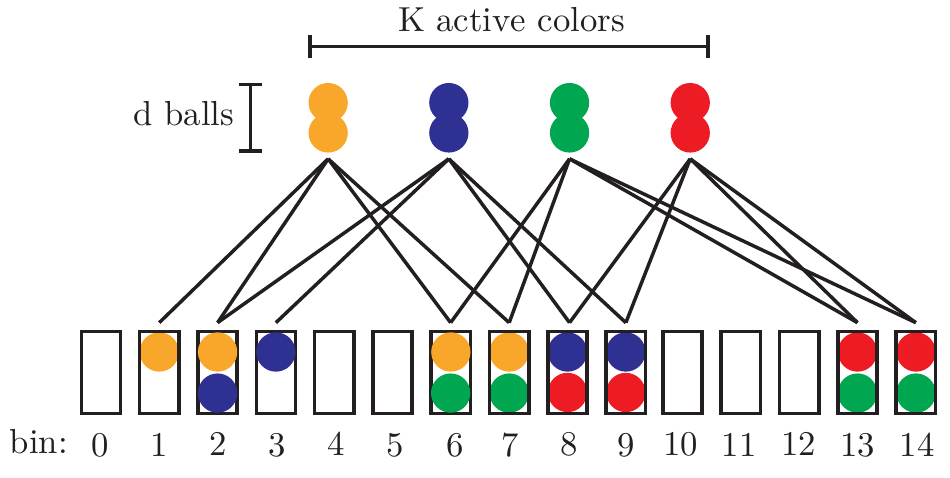}
    \caption{
An example graph showing active balls and bins that can result from rule R1.
Here $K = 4$, $M = 15$ and $d = 4$.
Each color is partitioned into $g = 2$ groups and the number of balls per group is $2$.
The game proceeds by following the rules R2-R3.
We want to find the minimum number of bins necessary to recover all the active colors.}
    \label{fig:balls-and-bins}
\end{figure}

Our balls-and-bins game is motivated by an extension of the recently
proposed FFAST (Fast Fourier Aliasing-based Sparse Transform) 
algorithm~\cite{pawar:2013} to the case where sparsity is with respect to some
wavelet basis.
In this setting, the balls correspond to the wavelet coefficients and the bins correspond to samples.
Wavelets are universally recognized to be an efficient sparse representation for the class of piecewise smooth signals having a relatively 
small number of discontinuities, a very good model for many real-world signals 
such as natural images~\cite{vetterli1995wavelets}.
In particular, we note that in MRI, images are observed to be sparse with respect to appropriately chosen wavelet bases, and acquisition is in the Fourier domain 
~\cite{lustig2008compressed}.
The computational bottleneck in recovering these images has been observed to be the computation of multiple large Fourier transforms.
Our extension of the FFAST algorithm targets this problem.
The details of this extension can be found in Section~\ref{sec:wavelets}.



\newcolumntype{L}{>{\centering\arraybackslash}m{0.45cm}}
\begin{table}[h]
    \centering
    \caption{Thresholds ($M/K$) for $d=6$.  Note that this table contains a
        strict subset of all possible strategies.}
    \label{tab:empirical_thresholds} 
    \setlength\tabcolsep{5pt}
    \begin{tabular}{l|cccccccc}
        \toprule
        Regime & \rot{$1,1,1,1,1,1$} & \rot{$1,1,1,1,2$} & \rot{$1,1,1,3$} & \rot{$1,1,4$} & \rot{$1,1,2,2$} & \rot{$1,2,3$} & \rot{$2,2,2$}\\
        \midrule
        $M/K$ & $1.570$ & $1.533$ & $1.489$ & $1.518$ & $1.533$ & $1.542$ & $1.547$ \\
        \bottomrule
    \end{tabular}
\end{table}

\subsection{Related Work}

Density evolution methods have proven powerful in analyzing the performance of
sparse-graph codes and their extensions~\cite{richardson2001capacity,richardson2008modern}.
Unfortunately, these methods apply only for sparse
random graphs that are locally tree-like.  This is not the case for all
ball-throwing strategies in the game we outlined above, e.g. the $[2,2,2]$
scheme.  
Recently Donoho et. al. have introduced approximate message passing (AMP) techniques to extend the message passing paradigm to the case when the underlying factor graph is dense~\cite{donoho2009message}.
These techniques were rigorously analyzed by Bayati and Montanari in~\cite{bayati2011dynamics}.
Although AMP has been successfully applied to many problem domains, e.g.,
\cite{tan2015compressive,maleki2013asymptotic}, it imposes a dense structure on
the factor graph. Additionally, Kudekar et.
al have been able to show the benefit of structure in convolutional LDPC codes
through the spatial coupling effect ~\cite{kudekar2011threshold, kudekar2013spatially}.
However, if the bipartite graph is sparse, but contains
small, structured cycles, it may not be necessary to invoke such methods.    



\subsection{Main Results and Organization of the Paper}
The main results of this paper are the derivation of exact thresholds for random
graphs with smear-length $2$, and bounds for higher smear-length which are empirically shown to be very tight.
We additionally detail an
application to fast recovery of sparse wavelet representations of signals when acquisition
is in the Fourier domain.  In particular, given an 
$n$-dimensional signal which is $K$-sparse\footnote{That is, an $n$-dimensional
    signal with exactly $K$ non-zero entries} with respect to the $1$-stage Haar wavelet, our analysis
shows that $2.63K$ Fourier domain samples are needed to recover the signal in
$\mathcal{O}(K\log{K})$ time.

The organization of this paper is as follows.
In Section~\ref{sec:haar}, we derive sharp thresholds for the specific case of $2$-smearing.
In Section~\ref{sec:bound}, we derive lower bounds for the probability of recovery in the case of arbitrary smearing.
We outline the connection between the ball
coloring game and the recovery of sparse wavelet representations in Section~\ref{sec:wavelets}.
We conclude with Section~\ref{sec:conclusion} by summarizing some interesting open problems and conjectures that have resulted from this work.

\section{Main Results}
\label{sec:main_results}
In this section, we introduce a new random graph ensemble, termed the `smearing ensemble', 
and show how to derive density evolution recursions for graphs from this
ensemble.  We
detail the derivation of density evolution for smearing with smear-length $2$,
and we give lower bounds for smear-length $L$.  The density evolution for
smear-length $3$ is relegated to Appendix~\ref{sec:3smear}. We now formally define the
smearing ensemble: 
\begin{definition}
\label{def:smearing}
Let $\mathcal{G}(K,M,s)$ denote the `smearing graph ensemble' with $K$ left
nodes and $M$ right nodes with connectivity characterized by $s$.
Each left node selects $g$ right nodes, where $g$ is the length of the vector $s$.
At the $i$th iteration ($0 \leq i < g$), edges are put between the left
node and right nodes $\{b_i,
b_{i+1}, \dots, b_{s_i - 1} \}$ modulo $M$ where $b_i$ is selected uniformly at random from
$\{0,1,\dots, M-1\}$.  
\end{definition}

Henceforth we are going to use the terms `left node' and `ball' interchangeably as well as `right node' and `bin' interchangeably.
We additionally refer to balls thrown to the same set of bins as a stream.  Now,
let $\lambda := gK/M$.  If $K$ balls are thrown into the $M$ bins randomly, the
degree (that is the number of balls in) of each stream will be distributed like
Poisson$(\lambda)$ by the Poisson approximation to the binomial distribution.
For the sake of brevity, we omit a general introduction to density evolution methods, pointing the interested reader to \cite{richardson2008modern}.
We now carefully derive the density evolution recurrence for smearing ensembles with the maximum smearing length of 2.

\subsection{Density Evolution on $2$-smear Graphs $(s = [2,2,2])$}
\label{sec:haar}
In our ball-coloring game, we noted that threshold for the setting $[2,2,2]$
outperforms that for the setting $[1,1,1,1,1,1]$.  In this section, we give
exact analysis for these thresholds\footnote{Although thresholds can be derived
for the other elements in the table, we give the $[2,2,2]$ derivation for
clarity.}.  First, we define our notation in Table~\ref{tab:two_thresholds}.

\begin{table}[h]
    \caption{Notation for density evolution with smear-length $2$}
    \centering %
    \begin{tabular}{l p{15cm}}
        \toprule
        $x_t:$ & Probability that a random ball is \textit{not} removed at
        iteration $t$\\
        $q_t:$ & Probability that none of the bins in a smeared pair is removed at
        iteration $t$ \\
        $d_t:$ & Probability that \textit{all} balls in the same stream as the
        reference ball are removed at time $t$ \\
        $s_t:$ & Probability that \textit{all} balls in a stream which
        intersects, but does not fully overlap, with the reference stream are
        removed at time $t$ \\
        \bottomrule
    \end{tabular}
    \label{tab:two_thresholds} 
\end{table}

\begin{figure}[!h]
    \centering
    \includegraphics[width=0.35\linewidth]{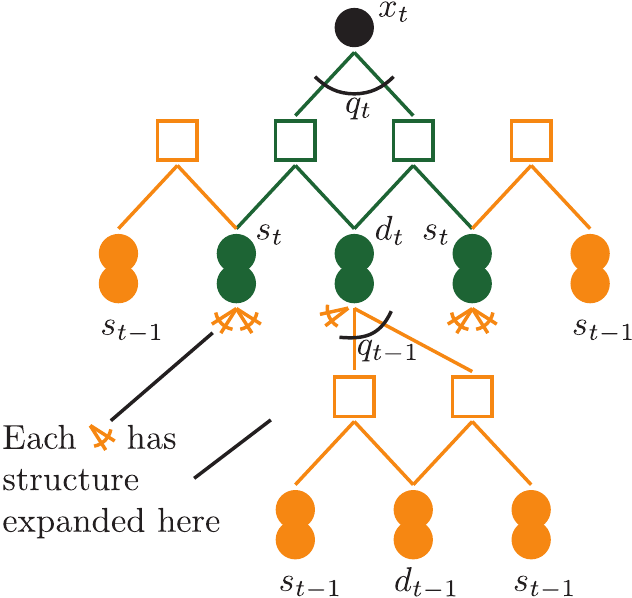}
    \caption{
Depth $2$ neighborhood of a ball in a graph with $[2,2,2]$ smearing (see
Definition~\ref{def:smearing}).
The bins and balls are grouped by their colors with respect to distance from the root.
    }
    \label{fig:haar_1_diagram}
\end{figure}

Unlike as in conventional density evolution methods in the LDPC literature which are based on an edge perspective, we take a node perspective here because dependencies between edges in the smearing setting complicate the analysis.
We refer the interested reader to~\cite{richardson2002multi} for a discussion of
the differences between node perspective and edge perspective.
Our goal is to derive recurrences for each of the quantities
defined in Table~\ref{tab:two_thresholds}.  
In this case, the equations for $x_t$ and $q_t$ are clear:
\begin{align}
x_t &= q_t^3, \\
q_t &= 1 - d_t(1 - (1 - s_t)^2).
\end{align}
To see these, we recall the dynamics of the peeling decoder: a ball is removed as soon as any of
its neighbors are removed, whereas a bin is removed only if all balls contained
in it are removed.
Thus, $x_t$ only occurs when
none of the pairs of smeared bins to which it is connected are removed.  On the
other hand, for a bin to be removed, \textit{all} of its connected nodes must be
removed.  Thus, all of the balls in the same stream as the reference ball must
be removed, which happens with probability $d_t$.  Additionally, at least one of
the two streams that do not fully overlap with the reference ball must be
removed, which happens with probability $(1 - (1 - s_t)^2)$.
Now, note that in
Fig.~\ref{fig:haar_1_diagram}, the recurring structures are those with
probabilities $s_t$ and $d_t$, so we are done upon calculating these quantities.

We first calculate $d_t$.
Let $D_1$ denote the number of balls in this stream other than the reference ball we are looking at.
It follows that $D_1$ is distributed as Binomial$(3(K-1),1/M)$, which can be approximated well for large $K$ and $M$ by Poisson($\lambda$), where $\lambda = 3K/M$.
It follows that:
\begin{align}
d_t &= \sum_{i=0}^{\infty} P(D_1 = i)(1 - q_{t-1}^2)^i \nonumber \\
    &= \sum_{i=0}^{\infty} e^{-\lambda}\frac{\lambda^{i}}{i!}(1 - q_{t-1}^2)^i
    = e^{-\lambda q_{t-1}^2}.
\end{align}

Now, we tackle $s_t$.  Note that intuitively, $s_t \ge d_t$.  This is because
each ball in a stream tracked by $s_t$ gets the same independent help from $2$
bins as $d_t$.  However, there is additionally a shared bin between all these balls.
This shared bin is able to aid in the removal of the stream tracked by $s_t$
when exactly one ball is left in the stream, and the bin has no contributions
from elsewhere.  The incorporation of this help is the key ingredient in using
the structure to help characterize the decoding thresholds.  Letting $D_2$ be the number of balls in this stream, 
we precisely characterize this as follows:
\begin{align}
s_t &= \sum_{i=0}^{\infty}P(D_2 = i)  \bigl[(1 - q_{t-1}^2)^{i} +
is_{t-1}q_{t-1}^2(1 - q_{t-2}^2)^{i-1}\bigr] \nonumber \\
    &= e^{-\lambda q_{t-1}^2} + \lambda s_{t-1} q_{t-1}^2 e^{-\lambda q_{t-2}^2}.
\end{align}
Note that the first term in this recursion is exactly $d_t$.
The second term describes the help received from the shared bin.
In order to properly characterize this term, it is necessary to introduce the
notion of \textit{memory}: the shared bin can help if it has all contributions removed except for one by time $t$.
Unlike when the neighborhood is tree-like and contains no cycles and branches
of the tree become independent~\cite{hager2015density}, when cycles are
introduced, dependence between branches is introduced.
The introduced memory
captures exactly this dependence.

We summarize the results of this section with the following lemma:
\begin{lemma}
Consider a random graph from the ensemble $G(K,M,[2,2,2])$.
Then, for $M \ge 1.547K$, recovery using the peeling decoder will succeed with high probability.
\end{lemma}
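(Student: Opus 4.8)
The plan is to follow the standard density-evolution program adapted to this cyclic setting: interpret the recurrences derived above as tracking the expected decoding trajectory, show that this trajectory is driven to zero above the claimed threshold, and then upgrade the in-expectation statement to a high-probability statement via concentration. Since the recurrences for $x_t,q_t,d_t,s_t$ are established in the excerpt, I take them as given and analyze the deterministic dynamical system they define, driven by the single parameter $\lambda = 3K/M$. The quantity $x_t = q_t^3$ is the probability that a randomly chosen ball survives iteration $t$, so forcing $x_t \to 0$ (equivalently $q_t \to 0$) certifies that, in expectation, a vanishing fraction of balls remain.

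First I would analyze convergence of the coupled $(q_t,s_t)$ system under the natural initialization $q_0 = 1$ (nothing removed before decoding begins). Substituting $d_t = e^{-\lambda q_{t-1}^2}$ into the updates for $q_t$ and $s_t$, the key structural fact is that each update is monotone in the relevant variables, so that from the all-ones start the sequence $\{q_t\}$ is monotonically non-increasing and hence converges to the largest fixed point $q_\infty(\lambda) \in [0,1]$. Imposing $s_t = s_{t-1} = s$ and $q_t = q_{t-1} = q_{t-2} = q$, the memory recurrence collapses to $s = e^{-\lambda q^2}/(1 - \lambda q^2 e^{-\lambda q^2})$, which I would insert into the fixed-point equation $q = 1 - d\,(1-(1-s)^2)$ with $d = e^{-\lambda q^2}$. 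This reduces the problem to a single scalar question: for which $\lambda$ is $q=0$ the only solution in $[0,1]$? Because $q_\infty(\lambda)$ is monotone in $\lambda$, there is a critical $\lambda^\star$ below which $q_\infty = 0$; solving the fixed-point equation numerically yields $\lambda^\star = 3/1.547$, so the density evolution drives $x_t \to 0$ precisely when $M \ge 1.547K$.

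The real work lies in upgrading ``$\E[x_t]\to 0$'' to ``the peeling decoder removes all balls with high probability.'' In the classical tree-like setting one invokes a Doob-martingale (Azuma) or Wormald differential-equation argument to show that the actual fraction of surviving balls concentrates around its density-evolution value. The obstacle specific to the smearing ensemble is that the depth-$2$ neighborhoods contain the structured short cycles visible in Fig.~\ref{fig:haar_1_diagram}, so branches of the computation tree are no longer independent and the usual tree-independence justification fails. My plan is to argue that the one-step memory variable $s_t$ exactly accounts for the dependence induced by these short, deterministic smearing cycles --- i.e., that conditioned on the memory, the residual correlations between branches vanish as $(K,M)\to\infty$ --- and that the expected number of neighborhoods containing additional, non-smearing cycles is $o(1)$ and therefore contributes negligibly. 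Combining this reduction with an edge-exposure martingale concentration bound then shows that the number of surviving balls concentrates around $K x_t$, which can be driven below $1$ once $x_t$ is made sufficiently small by iterating past the point where $q_\infty = 0$; a standard ``finishing'' argument then completes the peeling to zero with high probability.

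I expect the correctness of the memory reduction to be the main obstacle: precisely quantifying that a single step of memory captures all asymptotically-relevant dependence introduced by the smearing cycles, and that higher-order correlations are genuinely negligible in the limit, is the crux that separates this argument from the classical tree-like density-evolution proofs. Once that structural claim is in hand, both the fixed-point analysis and the martingale concentration are routine.
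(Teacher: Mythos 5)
Your proposal follows essentially the same route as the paper's proof: the threshold is read off from the fixed point of the derived density-evolution recurrences (your algebra for the stationary $s$ and $q$ is consistent with the paper's equations), and the high-probability claim is obtained by concentration of the true fraction of surviving balls around $x_t$, using the fact that the only asymptotically relevant cycles are the structured smearing cycles already built into the memory term. The crux you flag at the end --- showing that residual, non-smearing dependence is negligible --- is precisely what the paper disposes of by observing that the bipartite graph of independent \emph{streams} (with smearing stripped away) converges locally to a tree, deferring the details to the arguments of \cite{pedarsani2017phasecode}.
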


\begin{proof}
The threshold follows from the density evolution derived above.  It is important
to note that these recurrences were derived using only high probability cycles
and to be precise, it is necessary to show that the actual fraction of
unidentified balls concentrates around the average, $x_t$.  We need
to show convergence of the neighborhood of a random node in the bipartite graph
created by the independent \textit{streams}
without smearing to a tree.  This follows directly from the arguments
in~\cite{pedarsani2017phasecode} and we omit the details here. 
\end{proof}

The analogous recurrences are derived exactly for the case of smear-length $3$
in the appendix and highlight the difficulty in extending the exact analysis to
larger smear-lengths.  In the next section, we derive simple, but
effective, lower bounds for $L$-smearing.  We do this by using the following
principles of generalization, inspired by the derivation above:
\begin{table}[!h]
    \caption*{Generalizing to $L$-smearing}
    \centering %
    \begin{tabular}{l p{17cm}}
        \toprule
        1) & In a stage with $L$ smearing, there will be $L-1$ steps of memory
        necessary in order to capture the smearing structure \\
        2) & In a stage with $L$ smearing, the number of recurring structures will
        be $L$ \\
        3) & Shared bins can be used through the introduction of memory in the
        recursion \\
        \bottomrule
    \end{tabular}
\end{table}


\subsection{Lower Bound on $L$-smearing}
\label{sec:bound}
For clarity, we will consider the ensembles with $s = [L,L,L]$.
Along with $x_t, q_t, d_t$ as described in Table~\ref{tab:two_thresholds}, we define quantity $s_t^{(i)}$ in Table~\ref{tab:L_thresholds}.

\begin{table}[h]
    \caption{Notation for density evolution with smear-length $L$}
    \centering %
    \begin{tabular}{l p{15cm}}
        \toprule
        $s_t^{(i)}:$ & Probability that \textit{all} nodes in the streams which
        \textit{does not} intersect with the reference stream in $j$ bins where $1 \le j
        \le i$ 
        bins are removed at time $t$ \\
        \bottomrule
    \end{tabular}
    \label{tab:L_thresholds} 
\end{table}

\begin{figure}[!h]
    \centering
    \includegraphics[width=0.4\linewidth]{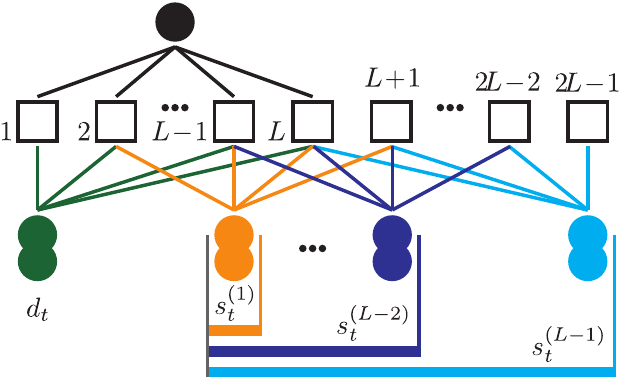}
    \caption{
Depth $1$ neighborhood of a ball under $L$-smearing.
The quantities $d_t$ and $\{s_t^{(i)}\}_{i=1}^{L-1}$ are the recurrent structures, where $s_t^{(i)}$
        tracks the joint probability that all the streams which intersect with
        the reference stream in $\{j\}_{j=L-i}^{L-1}$ bins are removed at time $t$.
    }
    \label{fig:k_smearing}
\end{figure}

As described in the principles of generalization, there will be $L$ recursions,
and up to $L-1$ memory.
The probability $d_t$ is unaffected as it depends only on the other stages.
All of the $s_t^{(i)}$, however, can
have up to $i$ memory (corresponding to the number of bins that do not
overlap with the reference ball).  Thus, we take an approach much like a
first-order approximation of a Taylor series, and allow each $s_t^{(i)}$ to use
only one step of memory.  The
following lemma characterizes the critical quantity $q_t$ in terms of its
component streams.


\begin{lemma}
In a stage with $L$-smearing,
\begin{align*}
1 - q_t &= d_t \biggl[2s_t^{(L-1)} +
\sum_{i=2}^{L-1}s_t^{(i-1)}s_t^{(L-i)} - \sum_{i=1}^{L-1}s_t^{(i)}s_t^{(L-1)} \biggr].
\end{align*}
\label{lem:pie}
\end{lemma} 

\begin{proof}
See Appendix~\ref{lemma_pie}
\end{proof}

The following lemma then establishes monotonicity of $1 - q_t$ with
respect to $d_t, s_t^{(i)}$, which we may use to complete the bound.

\begin{lemma}
Let $f(d_t, s_t^{(1)}, \dots s_t^{(L-1)}) = 1 - q_t$, then $f(d_t, s_t^{(1)}, \dots s_t^{(L-1)})$ is non-decreasing in $(d_t, s_t^{(1)}, \dots,
s_t^{(L-1)})$.  
\label{lem:monotonicity}
\end{lemma}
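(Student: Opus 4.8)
The plan is to reduce the statement to the nonnegativity of the first-order partial derivatives of $f$ and then to compute them explicitly from the closed form of Lemma~\ref{lem:pie}. Write $f = d_t\, g$ with
\begin{align*}
g\bigl(s_t^{(1)},\dots,s_t^{(L-1)}\bigr) = 2 s_t^{(L-1)} + \sum_{i=2}^{L-1} s_t^{(i-1)} s_t^{(L-i)} - s_t^{(L-1)}\sum_{i=1}^{L-1} s_t^{(i)}.
\end{align*}
Since $f$ is a polynomial, it is smooth, and being non-decreasing with respect to the coordinatewise order is equivalent to $\partial f/\partial d_t \ge 0$ and $\partial f/\partial s_t^{(j)} \ge 0$ for every $j$. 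The key structural fact I would record first is the ordering $s_t^{(1)} \ge s_t^{(2)} \ge \cdots \ge s_t^{(L-1)} \ge 0$: by the interpretation in Table~\ref{tab:L_thresholds} and Figure~\ref{fig:k_smearing}, $s_t^{(i)}$ is the joint probability that all streams overlapping the reference in $L-i,\dots,L-1$ bins are removed, so increasing $i$ only enlarges the conjunction of removal events and cannot increase the probability. This ordered region is the domain on which the monotonicity is to be shown.

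The coordinate $d_t$ is immediate: $\partial f/\partial d_t = g$, which does not depend on $d_t$, and since $f = 1 - q_t$ is a probability we have $f \ge 0$, forcing $g = f/d_t \ge 0$ for any $d_t > 0$ (and by continuity at $d_t = 0$). For $1 \le j \le L-2$ only the symmetric quadratic sum and the final sum contribute, and differentiation gives
\begin{align*}
\frac{\partial f}{\partial s_t^{(j)}} = d_t\bigl(2 s_t^{(L-1-j)} - s_t^{(L-1)}\bigr).
\end{align*}
Since $1 \le L-1-j \le L-2 < L-1$, the ordering yields $s_t^{(L-1-j)} \ge s_t^{(L-1)}$, so the bracket is at least $s_t^{(L-1)} \ge 0$, and multiplying by $d_t \ge 0$ settles these coordinates.

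The remaining coordinate $s_t^{(L-1)}$ is the crux and is where I expect essentially all the difficulty to lie. Here the linear term contributes $2$, the symmetric quadratic sum contributes $0$ (its indices never reach $L-1$), and the final sum contributes $-\bigl(\sum_{i=1}^{L-1} s_t^{(i)} + s_t^{(L-1)}\bigr)$, so that
\begin{align*}
\frac{\partial f}{\partial s_t^{(L-1)}} = d_t\Bigl(2 - \sum_{i=1}^{L-1} s_t^{(i)} - s_t^{(L-1)}\Bigr).
\end{align*}
Nonnegativity in this direction is therefore equivalent to the bound $\sum_{i=1}^{L-1} s_t^{(i)} + s_t^{(L-1)} \le 2$, which --- unlike the earlier cases --- does not follow from the ordering and the range $[0,1]$ by themselves. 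The approach I would take is to bound $\sum_i s_t^{(i)}$ using the probabilistic meaning of the joint removal probabilities on the operating region that the truncated-memory recursion actually traverses (rather than on the full cube), combining this with the ordering to keep the total below $2$. Making this domain precise, and verifying the bound on it, is the main obstacle; once it is in hand, the other coordinates are routine.
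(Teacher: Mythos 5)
Your overall strategy --- reduce the claim to nonnegativity of the partial derivatives and close using the ordering $d_t \ge s_t^{(1)} \ge s_t^{(2)} \ge \cdots \ge s_t^{(L-1)} \ge 0$ --- is exactly the paper's, but the gap you run into at the coordinate $s_t^{(L-1)}$ is real and, as you set things up, unfixable. The root cause is not an inherent difficulty of the lemma: it is that you worked from the displayed formula in Lemma~\ref{lem:pie} as printed, which contains a typo. The final sum there should read $\sum_{i=1}^{L-1} s_t^{(i)} s_t^{(L-i)}$, not $\sum_{i=1}^{L-1} s_t^{(i)} s_t^{(L-1)}$. You can confirm this from the paper's own proof of that lemma, where the pairwise terms are $P(A_i, A_{i+1}) = d_t\, s_t^{(i)} s_t^{(L-i)}$ and are summed over $i$, and from the $L=3$ instance in the appendix, $q_t = d_t\bigl(2 s_t^{(2)} + (s_t^{(1)})^2 - 2 s_t^{(2)} s_t^{(1)}\bigr)$, which matches only the $s_t^{(L-i)}$ version. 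With the corrected formula the crux coordinate is harmless: the linear term contributes $2$, the middle sum never involves $s_t^{(L-1)}$, and the corrected final sum contributes $-2 s_t^{(1)}$ (from $i=1$ and $i=L-1$), so
\begin{align*}
\frac{\partial f}{\partial s_t^{(L-1)}} = 2 d_t\bigl(1 - s_t^{(1)}\bigr) \ge 0 .
\end{align*}
The other coordinates become $\partial f/\partial s_t^{(j)} = 2 d_t\bigl(s_t^{(L-1-j)} - s_t^{(L-j)}\bigr) \ge 0$ for $1 \le j \le L-2$, and $\partial f/\partial d_t = 2 s_t^{(L-1)} + \sum_{k=1}^{L-2} s_t^{(k)}\bigl(s_t^{(L-k-1)} - s_t^{(L-k)}\bigr) - s_t^{(1)} s_t^{(L-1)} \ge s_t^{(L-1)}\bigl(2 - s_t^{(1)}\bigr) \ge 0$, all by the ordering; this is precisely the paper's argument. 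Your instinct that the mistyped version cannot be rescued is correct: with that formula your derivative at the all-ones corner equals $d_t(2-L) < 0$ for $L \ge 3$, and the all-ones point is the limit of the operating region under successful decoding, so no restriction to the ``actually traversed'' domain saves it.

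A secondary caution on the coordinate you treated as immediate: you infer $\partial f/\partial d_t = g \ge 0$ from ``$f = 1 - q_t$ is a probability.'' That reasoning only applies at points realizable as true probabilities of the process, whereas the lemma is invoked in Corollary~\ref{cor:bound} at points produced by the lower bounds~\eqref{eq:bound_dt} and~\eqref{eq:bound_st}, which need not be realizable. The paper avoids this circularity by establishing $g \ge 0$ algebraically from the ordering, as in the bound on $\partial f/\partial d_t$ above; you should do the same.
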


\begin{proof}
See Appendix~\ref{lemma:mon}.
\end{proof}

Given the above characterization of $q_t$, it now remains to give 
lower bounds for $d_t, s_t^{(1)}, \dots,
s_t^{(L-1)}$.  We can bound these as follows:
\begin{align}
d_t &= e^{-\lambda q_{t-1}^2}, \label{eq:bound_dt}
\\
s_{t}^{(i)} &\ge e^{i\lambda q_{t-1}^2} + \lambda q_{t-1}^2 e^{-\lambda
    q_{t-2}^2} r^{(i)}_{t-1}, \label{eq:bound_st}
\end{align}
for $i \in \{1,\dots,L-1\}$, where
\begin{align}
    r^{(i)}_{t} \defeq \sum_{j=1}^{i}
    \sum_{k=1}^{j}s_{t-1}^{(k)}e^{-(L-k-1)\lambda q_{t-2}^2}e^{-(k-1)\lambda
        q_{t-1}^2}.
\end{align}

There is a simple way to think about the problem so that these bounds appear.
Consider the bound on $s_{t}^{(i)}$.  This tracks the joint probability that all
the streams which intersect the reference stream in $L-j$ bins where $1 \le j
\le i$ are removed at time $t$.  In order for all of these streams to be
removed, there are two cases:

\begin{enumerate}
    \item Each stream was removed from another stage.  This probability is
        tracked by the first term: $e^{i\lambda q_{t-1}^2}$.
    \item Exactly one ball remains among all the streams.  This probability is
        tracked by the second term. 
\end{enumerate}

We focus on the second case.  Suppose that the remaining ball is in the stream
which intersects the reference stream in $j$ bins.  This implies that it is also
contained in $L-j$ shared bins that do not intersect the reference stream.  It
can be removed by any of these bins, as long as it is the only
contribution.  This help is tracked by the summation in the second
term. 

\begin{corollary}
The lower bounds given in equations~\eqref{eq:bound_dt} and \eqref{eq:bound_st} imply a lower bound
on $x_t$, the probability that a random node is removed at time $t$.
\label{cor:bound}
\end{corollary}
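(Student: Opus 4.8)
The plan is to combine the three ingredients already in hand---the explicit formula for $q_t$ in Lemma~\ref{lem:pie}, the monotonicity established in Lemma~\ref{lem:monotonicity}, and the component-wise bounds \eqref{eq:bound_dt} and \eqref{eq:bound_st}---and to propagate the bounds on the recurrent quantities $d_t, s_t^{(1)}, \dots, s_t^{(L-1)}$ forward through the recursion into a bound on $x_t$. First I would record the relation $x_t = q_t^3$: since $s = [L,L,L]$ has $g = 3$, the reference ball is attached to three independent smeared streams, and it fails to be removed precisely when none of these three $L$-tuples is removed. Hence the probability that a random ball is removed by time $t$ equals $1 - q_t^3$, and controlling $q_t$ controls exactly the quantity named in the corollary.

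Next, by Lemma~\ref{lem:pie} we have $1 - q_t = f(d_t, s_t^{(1)}, \dots, s_t^{(L-1)})$, and by Lemma~\ref{lem:monotonicity} the map $f$ is non-decreasing in each of its arguments. Consequently, replacing $d_t$ and each $s_t^{(i)}$ by the lower bounds of \eqref{eq:bound_dt} and \eqref{eq:bound_st} can only decrease $f$, giving $1 - q_t \ge f(\underline{d}_t, \underline{s}_t^{(1)}, \dots, \underline{s}_t^{(L-1)})$, where underlines denote the right-hand sides of the bounds. A lower bound on $1 - q_t$ is a lower bound on the probability that a smeared tuple is removed, hence an upper bound on $q_t$, and therefore a lower bound on the removal probability $1 - q_t^3$. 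This is the content of the corollary, and at this per-step level the argument is immediate from the two lemmas.

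The one remaining point, and the one I expect to require the most care, is closing the recursion in time so that the bound is self-contained and iterable. The right-hand sides of \eqref{eq:bound_dt} and \eqref{eq:bound_st} are themselves functions of $q_{t-1}$ and $q_{t-2}$, so I would define a surrogate sequence $\underline{q}_t$ by substituting $\underline{q}_{t-1}, \underline{q}_{t-2}$ into these right-hand sides and then argue by induction---carried over two consecutive time steps, because of the one step of memory retained for each $s_t^{(i)}$---that $\underline{q}_t \ge q_t$ for all $t$, with the monotone propagation above supplying the inductive step.

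The \emph{main obstacle} is that this substitution must preserve the direction of the inequality. The factor $d_t = e^{-\lambda q_{t-1}^2}$ is decreasing in $q_{t-1}$, whereas the prefactor $\lambda q_{t-1}^2$ appearing in \eqref{eq:bound_st} is increasing, so the surrogate update has mixed monotonicity in $q$. I would therefore need to verify that, in the operating regime of interest, the net dependence still yields valid lower bounds on $d_t$ and on each $s_t^{(i)}$ when the surrogate $\underline{q}$ is inserted, so that $\underline{q}_t \ge q_t$ indeed persists under iteration; one expects the decreasing exponential to dominate the quadratic prefactor in the relevant range, but this is the step that must be checked with care. Once it is, the claimed lower bound on $x_t$ follows from the monotone propagation, completing the corollary.
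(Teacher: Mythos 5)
Your core argument is exactly the paper's proof: Corollary~\ref{cor:bound} is established there by precisely the combination you give, namely substituting the component lower bounds \eqref{eq:bound_dt} and \eqref{eq:bound_st} into the expression for $1-q_t$ from Lemma~\ref{lem:pie} and invoking the monotonicity of Lemma~\ref{lem:monotonicity}, with $x_t = q_t^3$ converting this into a statement about the removal probability. Your final two paragraphs, on closing the recursion in time and the mixed monotonicity of the surrogate update, go beyond what the paper proves --- its one-line proof treats only the per-step implication, so the unverified check you flag there is additional material (it would indeed be needed to make the bound numerically iterable, as in Fig.~\ref{fig:bounds_figs}) rather than a gap relative to the paper's own argument.
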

\begin{proof}
    This follows directly from Lemma \ref{lem:pie} and Lemma \ref{lem:monotonicity}.
\end{proof}

\begin{figure}[t]
    \centering
    \includegraphics[width=0.4\linewidth]{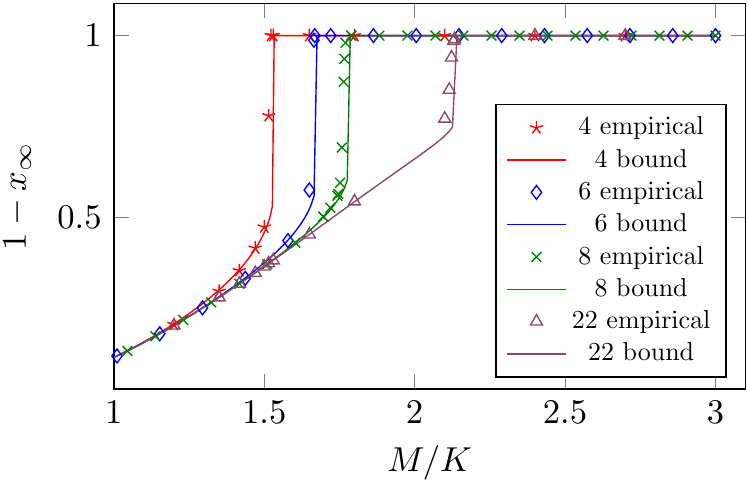}
    \caption{We plot our lower bounds against empirical simulations with
        settings $s = [1,1,L]$ for various $L$. The $y$-axis here denotes the
        probability that a random ball is removed when peeling stops (i.e., $1 - x_t$ as $t
    \rightarrow \infty$).  We note that this ensemble is
        used as it is of the most relevant practical interest to recovery of
        sparse wavelet representations (see Section~\ref{sec:wavelets}).}
    \label{fig:bounds_figs}
\end{figure}

\noindent We now give numerical experiments corroborating that the bounds in Corollary~\ref{cor:bound} capture the actual thresholds well.
We experimentally find the thresholds for full recovery by sweeping $\lambda$, and compare them to the thresholds implied by the bounds.
Fig.~\ref{fig:bounds_figs} shows these for filters with different lengths.

\section{Connections to Recovery of Sparse Wavelet Representations} 
\label{sec:wavelets}

In MRI, one acquires samples of the Fourier transform of an input signal of interest.
MRI speed is directly related to the number of samples acquired.
An inverse transform is then used to recover the original signal.
Mathematically, let $x$ be an $n$-length signal, and $X \defeq F_n x$ be its 
Fourier transform, where $F_n$ is the Fourier matrix of size $n \times n$.
In MRI, the problem is to recover $x$ from $\{X_i : i \in \mathcal{I}\}$ where the set $\mathcal{I}$ denotes the sampling locations, and this set is a design parameter.

We now present how the game of balls-and-bins and its analysis as described in Sections~\ref{sec:intro} and \ref{sec:main_results} relates to MRI.
For ease of illustration we confine ourselves to the noiseless setting and exact sparsity, but these assumptions can be relaxed.
If the signal $x$ is $K$-sparse, one can use the FFAST algorithm to recover $x$ from $O(K)$ samples with $O(K \log K)$ computations~\cite{pawar:2013,ong:2015}.
However, the images of interest in MRI are generally not sparse, but they do
have sparse wavelet representations~\cite{lustig2008compressed}. That is, we can express $x = W_n^{-1} \alpha $, where $W_n^{-1}$ is an appropriate wavelet, and $\alpha$ is sparse.
Under this signal model, the problem of recovering $\alpha$ can be transformed into the problem of decoding on an erasure channel using a sparse-graph code.
In particular, the graph for the code is drawn from a smearing ensemble with smearing length $L$ a function of the length of the underlying wavelet filter.

Furthermore, assume $\alpha$ is $K$-sparse and the length of $x$ is of the form $n = f_1 f_2 f_3$ where $f_1$, $f_2$ and $f_3$ are co-prime.
For $m \in \mathbb{Z}$ that divides $n$, let $D_{m , n}$ be the regular downsampling matrix from length $n$ to $m$, that is, $D_{m,n} = [I_m \cdots I_m]$ with $I_m$ is repeated $n/m$ times.
Let $y_{f_\ell}$ for $\ell \in \{1,2,3\}$ be the inverse Fourier transform of the downsampled $X$, that is,
    $y_{f_\ell} \defeq F_{f_\ell}^{-1} D_{f_\ell,n} X$.
Using the properties of Fourier Transform, it follows that
	$y_{f_\ell} = D_{f_\ell,n} x = D_{f_\ell,n} W_n^{-1}\alpha$.

Now, for simplicity, assume that $W$ is a block transform with block size $L$ (eg., for $1$ stage Haar wavelets $L = 2$), and the support of $\alpha$ is chosen uniformly random over the subsets of size $K$.
Using the relations between $y_{f_1}$, $y_{f_2}$ and $y_{f_3}$ and $\alpha$, recovering $\alpha$ is equivalent to decoding on a random graph from $G(K,M = f_1+f_2+f_3,s= [L,L,L])$.

We can actually `improve' the induced graph if a factor of the signal length has $L$ as a factor.
Say that $L$ divides $f_1$, it follows that $A_{f_1,n} W_n^{-1} = [W_{f_1}^{-1} \cdots W_{f_1}^{-1}]$,
where $W_{f_1}^{-1}$ is repeated $n/f_1$ times.
It can be verified that $y^\prime_{f_1} \defeq W_{f_1} y_{f_1} = A_{f_1,n} a$, hence it aliases the wavelet coefficients regularly without smearing.
The relation between $y^\prime_{f_1}$, $y_{f_2}$ and $y_{f_3}$ and $\alpha$ then induces a graph from $G(K,M = f_1+f_2+f_2,s= [1,L,L])$, which gives raise to a better threshold.

To complete the equivalence to decoding a sparse-graph code on an erasure channel, we need a mechanism to check if there is a single component in a bin (a single color in a bin).
This can be implemented by processing a shifted version of $x$ (incurring an
additional factor of $2$ of oversampling).  
We end this section with the following lemma.
\begin{lemma}
    \label{lem:haar_wavelet}
    Consider a signal $\alpha$ with ambient dimension $n$ and sparsity $K$, and access to samples from $F_{n}W_n^{-1}\alpha$. Then, the subsampling scheme described above along with the peeling
    decoder is able to exactly recover the sparse signal $\alpha$ using $2.63K$
    samples and time complexity $\mathcal{O}(K\log{K})$.
\end{lemma}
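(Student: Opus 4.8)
The plan is to establish the lemma in three stages: \textbf{(i)} reduce the Fourier-domain recovery of $\alpha$ to peeling decoding on a graph drawn from the smearing ensemble $G(K,M,[1,2,2])$; \textbf{(ii)} invoke the exact smear-length-$2$ density evolution to pin down the critical ratio $M/K$; and \textbf{(iii)} translate this ratio, together with the oversampling required for singleton detection, into the sample count $2.63K$ and the running-time bound. Much of stage~(i) is already assembled in the discussion preceding the lemma, so the bulk of the new work lies in adapting the exact recurrences of Section~\ref{sec:haar} to the $[1,2,2]$ ensemble and in carrying out the associated sample bookkeeping.

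For stage~(i), I would first make precise the aliasing/subsampling duality used above: acquiring the appropriate strided subsets of $X$ of sizes $f_1,f_2,f_3$ and applying the small inverse transforms $F_{f_\ell}^{-1}$ produces $y_{f_\ell}=D_{f_\ell,n}x=D_{f_\ell,n}W_n^{-1}\alpha$, so each coordinate of $y_{f_\ell}$ is a bin that collects an aliased sum of wavelet coefficients. Because $W$ is a block transform of block size $L=2$ for the $1$-stage Haar wavelet, each nonzero entry of $\alpha$ lands in two consecutive bins within a stage, i.e. a smear of length $2$; taking $f_1$ divisible by $L$ and pre-multiplying $y_{f_1}$ by $W_{f_1}$ de-smears the first stage, leaving connectivity $s=[1,2,2]$. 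The singleton test -- deciding whether a bin holds exactly one active coefficient and, if so, recovering its index and value -- is realized by additionally acquiring a shifted copy of each strided block, which doubles the Fourier-sample count while leaving the graph unchanged. Hence the graph has $M=f_1+f_2+f_3$ bins and the scheme uses $2M$ Fourier samples.

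For stage~(ii), I would run density evolution on $G(K,M,[1,2,2])$. Since the maximum smear-length is still $2$, the machinery of Section~\ref{sec:haar} transfers almost verbatim: the smeared-pair survival probability $q_t$ and the recurrent quantities $d_t,s_t$ satisfy the same recurrences, with the same single step of memory capturing the shared-bin dependence. The only modification is the per-ball non-removal probability, which now factors across the three stages as $x_t=p_t\,q_t^2$, where $q_t^2$ accounts for the two smeared stages exactly as in the $[2,2,2]$ derivation and $p_t$ is the standard cycle-free survival factor contributed by the non-smeared stage (the probability that its bin does not on its own remove the reference ball). Iterating the resulting coupled system and locating the largest $\lambda=3K/M$ for which $x_t\to 0$ determines the threshold; performing this fixed-point computation numerically yields $M/K\to 1.315$, whence $2M/K\to 2.63$. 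Exact recovery with high probability then follows from the same concentration argument as in the $[2,2,2]$ lemma: the neighborhood of a random node in the stream-induced graph (without smearing) converges to a tree, so the empirical fraction of unremoved balls concentrates around $x_t$, cf.~\cite{pedarsani2017phasecode}.

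Finally, the running time is dominated by the three inverse transforms $F_{f_\ell}^{-1}$ of lengths $f_\ell=\Theta(K)$, costing $\mathcal{O}(K\log K)$ in aggregate, together with the peeling decoder, which resolves $\mathcal{O}(K)$ singleton events at $\mathcal{O}(1)$ amortized cost apiece; the total is $\mathcal{O}(K\log K)$. I expect the main obstacle to be stage~(ii): one must verify that the non-smeared stage contributes only the independent factor $p_t$ without injecting extra memory or coupling into the $s_t$ recurrence, and then confirm that the fixed point of the $[1,2,2]$ system converges to the stated constant $1.315$ rather than to a mere bound. A secondary subtlety is tightening the sample count: one should check that the strided frequency blocks together with their shifts suffice to realize both the aliasing graph and the singleton test, so that the oversampling factor is exactly $2$ and no boundary effects inflate the $2.63K$ figure.
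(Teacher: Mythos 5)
The threshold half of your proposal follows the paper's route essentially verbatim (and is in fact more explicit than the paper, whose proof of this lemma just points back to Section~\ref{sec:main_results}): reduce recovery of $\alpha$ to peeling on $G(K,\,f_1+f_2+f_3,\,[1,L,L])$ with $L=2$, run the node-perspective density evolution adapted to this mixed ensemble, and double the bin count to account for the shifted samples needed by the singleton test, so that $2\times 1.315 = 2.63$. One imprecision worth fixing: the recurrences for $d_t$ and $s_t$ are \emph{not} literally ``the same'' as in the $[2,2,2]$ derivation. The cross-stage help factor changes from $q_{t-1}^2$ to $p_{t-1}q_{t-1}$ for balls in a smeared stage (compare the paper's $[1,1,3]$ appendix, where $d_t = e^{-\lambda p_{t-1}^2}$), and you need a separate recurrence for $p_t$ itself. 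You flag exactly this as your ``main obstacle,'' so this is a matter of detail rather than of approach.

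The genuine gap is in the complexity half. You charge $\mathcal{O}(1)$ amortized cost per singleton event and let the FFTs dominate, but this presupposes that a singleton in a \emph{smeared} stage can be resolved locally by the ratio test, as in standard FFAST. It cannot: the value observed in a smeared bin is the active coefficient scaled by a wavelet filter tap, so even after the shifted copy reveals a location, it remains ambiguous which coefficient within the length-$L$ block (equivalently, which filter) generated it. The paper's proof of this lemma spends essentially all of its effort on precisely this issue: it introduces a basis-aware peeling decoder distinguishing ``good'' (unsmeared) from ``bad'' (smeared) stages, in which a singleton in a bad stage does not peel immediately but instead generates hypotheses $(b, L, V)$ over the possible filters, which are cross-checked against other bins; each bin maintains a hypothesis list of size $\mathcal{O}(K)$, stored in a red-black tree so that insertions and deletions cost $\mathcal{O}(\log K)$, yielding $\mathcal{O}(K\log K)$ for decoding. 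Your argument needs either this mechanism or a proof that the filter ambiguity can be resolved in constant time; without one of these, the claim that peeling proceeds at $\mathcal{O}(1)$ per singleton is unsupported, and more importantly the \emph{correctness} of the decoder in bad stages --- that it can peel at all --- is left unestablished. Note that the final $\mathcal{O}(K\log K)$ bound would survive even at $\mathcal{O}(\log K)$ per event, so the fix costs you nothing in the statement of the lemma.
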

\begin{proof}
The threshold $2.63K$ follows from the density evolution derived in
Section~\ref{sec:main_results}.  The proof of complexity is given in
Appendix~\ref{sec:complexity}.
\end{proof}

\section{Conclusions and Future Work}
\label{sec:conclusion}
We have introduced a new random graph ensemble, termed the `smearing
ensemble' and devoloped a framework for deriving density evolution recurrences
for random graphs from this ensemble.  
Recalling our balls-and-bins game, our results show that some amount of smearing
can lead to a better strategy than the full random case.  A fascinating open question arises here:
what is the optimal ball-throwing strategy and what are the density evolution
recurrences for such a strategy? In this paper, we have given the first steps in
analyzing this problem rigorously.  To do this, we have leveraged the existence of small, structured
cycles and introduced the notion of memory into our density evolution.  We
believe there to be a deep connection between the introduction of memory in our
recurrences and the introduction of the `Onsager' term in the update equations
of AMP ~\cite{donoho2010message}.  We additionally believe the gains seen in
spatially coupled ensembles ~\cite{kudekar2013spatially} are intimately related 
to the structural gains of the smearing ensemble.  An extremely interesting open
problem is to determine the nature of these connections.
We have additionally shown the practical connection between the smearing ensemble and the
recovery of a sparse wavelet representation of a signal whose samples are taken
in the Fourier domain.

\balance

\bibliographystyle{IEEEtran}
\bibliography{ref}

\begin{appendix}
\section{Proofs for Bounds}
\label{sec:proofs_for_bounds}
Here, we provide proofs for the bounds on $L$-smearing.
\subsection{Proof of Lemma ~\ref{lem:pie}}
\label{lemma_pie}
\begin{proof}
    Consider the depth $1$ neighborhood of a random node, as
shown in Fig.~\ref{fig:k_smearing}.  We define the following events: let $A_i$
be the event that node $B_i$ is recovered by time $t$.  Since we are not
recursing here, we drop the references to $t$.  Thus, what we are interested in
is:
\begin{align}
q_t &= P\left(\bigcup_{i=1}^{L}P(A_i)\right) \nonumber\\
    &= P\left(A_1 \cup (A_2 \char`\\ A_1) \cup (A_3 \char`\\ A_2) \cup
    \dots \cup (A_L \char`\\ A_{L-1})
    \right)\nonumber\\
    &= P(A_1) + P(A_2 \char`\\ A_1) + \dots + P(A_L \char`\\ A_{L-1}) \nonumber \\
    &= \sum_{i=1}^{L}P(A_i) -\sum_{i=1}^{L-1}P(A_i,A_{i+1})
\end{align}
Now, we note that
$P(A_1) = P(A_k) = d_t s_t^{(L-1)}$ and $P(A_i) =
d_t s+t^{(L-i)} s_t^{(i-1)}$ where $1 < i < L$.   Additionally, we can see
that $P(A_i, A_{i+1}) = d_t s_t^{(i)} s_t^{(L-i)}$.
Plugging these into equation~\ref{eq:q_smearing} gives the result.
\end{proof}

\subsection{Proof of Lemma ~\ref{lem:monotonicity}}
\label{lemma:mon}
\begin{proof}
First, we note that 
\begin{align}
    d_t \ge s_{t}^{(j)} \ge s_{t}^{(i)}
\label{eq:mono}
\end{align}
\noindent if $i \ge j \ge 1$ by definition.  Additionally, we can see that $f(0,0,\dots, 0)
= 0$ and $f(1,1,\dots,1) = 1$.  Now, we have:
\begin{align*}
    \frac{\partial f}{\partial d_t} &= 2s_t^{(L-1)} +
    \sum_{i=2}^{L-1}s_t^{(i-1)}s_t^{(L-i)} - \sum_{i=1}^{L-1}
    s_t^{(i)}s_t^{(L-i)} \\
    &= s_t^{(L-1)} + \sum_{i=1}^{L-2} s_t^{(1)}(s_t^{(L-i-1)} - s_t^{(L-i)} ) +
    s_t^{(L-1)} - s_t^{(L-1)}s_t^{(1)} \\
    &\ge s_t^{(L-1)} + s_t^{(L-1)}(1 - s_t^{(1)}) \\
    &\ge 0 
\end{align*}
\noindent where the first inequality follows by Equation ~\ref{eq:mono} and the
last inequality follows since $s_t^{(1)} \le 1$ and $s_t^{(L-1)} \ge 0$.  Additionally,
we can see that for $1 \le i < L-1$:
\begin{align*}
    \frac{\partial f}{\partial s_t^{(i)}} = 2d_t(s_t^{(L-i-1)} -
    s_t^{(L-i)}) \ge 0
\end{align*}
\noindent where the inequality follows from Equation \ref{eq:mono}.  The
argument follows similarly for $\frac{\partial f}{\partial s_t^{(L-1)}} $. 
    Thus, since
the partial derivatives are all non-negative, the result follows.
\end{proof}

\section{$3$-smearing ($S[1,1,3]$)} 
\label{sec:3smear}
In this section, we provide exact thresholds for the ensemble drawn from
$S[1,1,3]$, illustrate why it is difficult to generalize, and draw out the
structure in the smearing patterns.

We introduce the notation $p_t$ to denote the probability of an edge from a node 
to a bin in a stage with 1 smearing
is \textit{not} removed at time $t$.
Also note that there are streams missing from the
diagram, the symmetric picture for nodes ``hanging off the edge" are not shown.
We can see that (from Lemma ~\ref{lem:pie}):
\begin{align}
q_t &= P(A_1 \cup A_2 \cup A_3) \nonumber \\
&= P(A_1) + P(A_2) + P(A_3) - P(A_1 \cap A_2)
- P(A_2 \cap A_3) - P(A_1 \cap
A_3) + P(A_1 \cap A_2 \cap A_3) \nonumber \\ 
&= d_t(2s_t^{(2)} + (s_t^{(1)})^2 - 2s_t^{(2)}s_t^{(1)})
\end{align}

Now, exactly as in section \ref{sec:haar}, we can see that:
\begin{equation}
d_t = e^{-\lambda p_{t-1}^2}
\end{equation}

Now we analyze $s_t^{(1)}$.  We omit the detail before the Taylor series
approximation for readability.  Now, note that the nodes in the stream can
either all be cleared from the other stages, or they can be cleared by the bin
shared by all streams tracked by $s_t^{(2)}$, call this bin $B_1$.  Additionally
call the bin shared by the stream thrown to the outermost bin as $B_2$.
The probability that the stream is cleared from the other stage is $ e^{-
    p_{t-1}^2}$.  The only way the stream could have been cleared by $B_1$ is
if there were exactly $1$ node remaining at time $t-1$ and $B_1$ knows it is a
singleton at time $t-1$.  Note that $B_1$ is contaminated by the stream of balls
thrown to the outermost bin and this stream must have been cleared by time $t-2$.  This can
happen in $3$ not necessarily disjoint ways.  First, $B_1$ is ``clear from
below" and all the balls in the stream thrown to the outermost bin were peeled by $t-2$.
This happens with probability 
\begin{equation}
    s_{t-1}^{(1)}e^{-\lambda p_{t-2}^2}
\label{eq:help1}
\end{equation}
Second, exactly one node is missing from the stream of balls thrown to $B_1$,
the stream of nodes thrown to the outermost bin was empty at time $t-2$, and
$B_2$ knows it
is a singleton at time $t-2$.  This happens with probability:
\begin{align}
    &\bigl(e^{-\lambda p_{t-2}^2} +
    s_{t-1}^{(2)}\lambda p_{t-2}^2e^{-\lambda
    p_{t-3}^2}\bigr) 
\cdot \lambda p_{t-2}^2e^{-\lambda p_{t-3}^2}
\label{eq:help2}
\end{align}
Finally, one ball can be missing from the stream of balls thrown to the
outermost bin.
Then, $B_2$ must be a singleton at time $t-2$, so we have the probability:
\begin{equation}
s_{t-2}^{(2)} \lambda  p_{t-2}^2 e^{-\lambda p_{t-3}^2}
\label{eq:help3}
\end{equation}
Finally, we add a term to correct for overcounting.  The overlap between the
first and last terms is:
\begin{equation}
s_{t-2}^{(2)}\lambda p_{t-2}^2e^{-2\lambda p_{t-3}^2}
\label{eq:help4}
\end{equation}
Note that if at time $t-2$, there were exactly $1$ node in the stream of nodes
thrown to $B_1$, and both $B_2$ and the bin below were singletons, then the first and
last terms overlap.  Also note that these two streams are exactly $s_{t-2}^{(2)}$ from
the perspective of $B_2$ and the bin below, and the term follows.

Thus, we can see that:
\begin{align}
s_{t}^{(1)} &= e^{-\lambda p_{t-1}^2} + \lambda  p_{t-1}^2e^{-\lambda
    p_{t-2}^2}\cdot 
\biggl[
s_{t-1}^{(1)}e^{-\lambda p_{t-2}^2} +  
\bigl(e^{-\lambda p_{t-2}^2} + s_{t-1}^{(2)}\lambda p_{t-2}^2e^{-\lambda
    p_{t-3}^2}\bigr)  
\cdot \lambda p_{t-2}^2e^{-2\lambda p_{t-3}^2} \nonumber\\
&+ s_{t-2}^{(2)} \lambda  p_{t-2}^2 e^{-\lambda p_{t-3}^2}
- s_{t-2}^{(2)}\lambda p_{t-2}^2e^{-2\lambda p_{t-3}^2}
\biggr]
\label{eq:double}
\end{align}

Finally, we give:
\begin{align}
s_{t}^{(2)} &= e^{-2\lambda p_{t-1}^2} + \lambda  p_{t-1}^2e^{-\lambda
    p_{t-2}^2}\cdot 
\biggl[
2s_{t-1}^{(1)}e^{-\lambda p_{t-2}^2} + 
\bigl(e^{-\lambda p_{t-2}^2} + s_{t-1}^{(2)}\lambda p_{t-2}^2 
\cdot e^{-\lambda
    p_{t-3}^2}\bigr)\lambda p_{t-2}^2e^{-2\lambda p_{t-3}^2} \nonumber\\
&+ s_{t-2}^{(2)} \lambda  p_{t-2}^2 e^{-\lambda p_{t-3}^2} +
s_{t-1}^{(2)}e^{-\lambda p_{t-1}^2} 
- s_{t-1}^{(2)} e^{-\lambda p_{t-1}^2} - s_{t-2}^{(2)}\lambda
p_{t-2}^2e^{-2\lambda p_{t-3}^2}
\biggr]
\label{eq:double}
\end{align}

We again note that these recursions show tight agreement with simulations. While they are difficult to digest, there is significant structure in the
recursions.  Analyzing the structure from whether or not a bin is cleared leads
us to the bounds given in Section~\ref{sec:bound}.  Additionally, one can see that whereas
$2$ smearing involved $1$ step of memory, $3$ smearing involves $2$ steps of
memory, and it becomes clear that in general $L$ smearing will involve $L-1$
steps of memory.  The amount of memory in addition to the smearing length results in complex recursions. Hence, a simple lower bound that is easy to generalize is
given in Section~\ref{sec:bound}.

\section{Proof of Lemma ~\ref{lem:haar_wavelet}}
\label{sec:complexity}
In order to describe the computational complexity, we first give pseudocode for the decoding algorithm:

\begin{algorithm}[H]
\label{alg:basis_aware}
    \SetAlgoLined
    \SetKwInOut{Input}{Input}
    \SetKwInOut{Output}{Output}
    \Input{Coupled bipartite graph $G$}
    \Output{$X$}
    \While{singletons remain}{
        \For{$B$ in bins}{
            \If{$B$ is singleton in a good stage}{
                Peel($B$)\;
            } 
            \If{$B$ is singleton in a bad stage}{
                AddToHypothesisList($B$)\;
            }
        }
    }
    \caption{Basis-aware Peeling}
\end{algorithm}

\begin{algorithm}[H]
    \label{alg:hypothesis}
    \SetAlgoLined
    \SetKwInOut{Input}{Input}
    \Input{Bin $B$}
    $L \leftarrow$ Location($B$)\;
    $V \leftarrow$ Value($B$)\;
    \For{$B^\prime$ in bins connected to variable node $(L,V)$}{
        \For{Basis $b$ in set of filters}{
            $B^\prime \leftarrow$ Hypothesis $(b,L,V)$\;
            \If{Singleton created}{
                Peel($B$);
            }
        }
    }
    \caption{AddToHypothesisList}
\end{algorithm}

Now, note that the creation of the bipartite graph is done in $\mathcal{O}(K
\log{K})$ time and is disjoint from the decoding process.  Decoding uses an
iterative decoder with a constant number of iterations
~\cite{richardson2008modern}.  Additionally, the number of bins iterated over is
linear in $K$.  We now note the size of the hypothesis list in each bin is
$\mathcal{O}(K)$.  Thus, if the list is stored using a data structure with
$\mathcal{O}(\log{K})$ insertion and deletion, such as a red-black tree, the
iterative decoding complexity is $\mathcal{O}(K\log{K})$.

\end{appendix}

\end{document}